\documentclass[british]{article}
\usepackage{ae,aecompl}
\usepackage[T1]{fontenc}
\usepackage[latin9]{inputenc}
\usepackage{parskip}
\usepackage{colortbl}
\usepackage{mathtools}
\usepackage{amsmath}
\usepackage{amsthm}
\usepackage{amssymb}
\usepackage{geometry}
\makeatletter

\providecommand{\tabularnewline}{\\}

\usepackage{capt-of}
\usepackage{bbm}
\date{}
\makeatother

\theoremstyle{plain}
\newtheorem{thm}{\protect\theoremname}
\theoremstyle{remark}
\newtheorem{notation}[thm]{\protect\notationname}
\theoremstyle{definition}
\newtheorem{defn}[thm]{\protect\definitionname}
\theoremstyle{plain}
\newtheorem{prop}[thm]{\protect\propositionname}
\usepackage{babel}
\providecommand{\definitionname}{Definition}
\providecommand{\notationname}{Notation}
\providecommand{\propositionname}{Proposition}
\providecommand{\theoremname}{Theorem}

\begin{document}
\global\long\def\IN{\mathbb{N}}%
\global\long\def\II{\mathbbm{1}}%
\global\long\def\IZ{\mathbb{Z}}%
\global\long\def\IQ{\mathbb{Q}}%
\global\long\def\IR{\mathbb{R}}%
\global\long\def\IC{\mathbb{C}}%
\global\long\def\IP{\mathbb{P}}%
\global\long\def\IE{\mathbb{E}}%
\global\long\def\IV{\mathbb{V}}%

\title{Reconstructing the Probability Measure of a Multi-group Curie-Weiss
Model with Interacting Groups}
\author{Miguel Ballesteros\thanks{IIMAS-UNAM, Mexico City, Mexico}, Gerardo
Franco Cordova\footnotemark[1], Fedro Guillen\footnotemark[1], and
Gabor Toth\footnotemark[1]
\thanks{Corresponding author, e-mail: gabor.toth@iimas.unam.mx}}
\maketitle
\begin{abstract}
We study the problem of reconstructing the probability measure of
a multi-group version of the Curie-Weiss or mean-field model of ferromagnetism
from a sample of the voting behaviour the population. While originally
used to study phase transitions in statistical mechanics, the Curie-Weiss
or mean-field model has been applied to study phenomena where many
agents interact with each other, in particular in case of a heterogeneous
population with identifiable subpopulations. The degree of social
cohesion within social groups manifests in the way the members of
the group influence each others' decisions as well as how they behave
under outside influence from voters belonging to another group.  Contrary to single-group Curie-Weiss models, here we have a larger number of coupling parameters which have to be estimated. While
the maximum likelihood estimator of the coupling parameters has desirable
statistical properties in theory, computational challenges make applications
to larger populations impractical. Therefore, we analyse an estimator
based on asymptotic approximations to the behaviour of the Curie-Weiss
model valid for large populations. Due to the wide applicability of
models such as Curie-Weiss, the estimator is potentially useful in
disciplines such as political science, sociology, automated voting,
and preference aggregation.
\end{abstract}
\textbf{MSC 2020}: 62F10, 82B20, 60F05, 91B12

\textbf{Keywords}: Curie-Weiss model, Mathematical physics, Statistical
mechanics, Gibbs measures, Large population approximation, Mathematical
analysis

\section{Introduction}

The investigation of ferromagnetic systems has long served as a cornerstone
of statistical mechanics, providing a rigorous framework for understanding
how microscopic interactions among large collections of microscopic
particles generate macroscopic order. Among the most influential mathematical
models in this area are the Ising and Curie-Weiss models, both of
which combine conceptual simplicity with rich mathematical behaviour
and extensive applicability. Introduced by Wilhelm Lenz and Ernst
Ising in the 1920s \cite{Ising1925}, the Ising model assigns binary
spin variables to each point on the lattice $\IZ^{d}$, where $d\in\IN$,
and assumes that each spin is influenced by its nearest neighbours
and by an external magnetic field. Although the one-dimensional model
admits an exact solution and does not undergo a finite-temperature
phase transition, higher-dimensional versions exhibit far more intricate
phenomena, including critical behaviour and phase transitions, whose
existence was first established rigorously by Onsager for the two-dimensional
case \cite{Onsager1944}.

The Curie-Weiss model, proposed by Pierre Curie and Pierre Weiss in
the early 20th century in order to investigate ferromagnetic phase
transitions, replaces local interactions with an all-to-all interaction
structure in which every spin interacts equally with every other spin.
This mean-field approximation preserves fundamental collective effects,
most notably spontaneous magnetisation, while substantially simplifying
the mathematical analysis. Beyond its origins in statistical physics,
the model has become a versatile framework in numerous disciplines,
including neuroscience \cite{BouMcDMu2007}, economics, where mean-field
models of strategic interaction are widely studied \cite{JovaRose1988,BD2001},
queueing theory\cite{BaKaKPRS1992}, sociology \cite{CGh2007,AlBaKhBu2022},
and political science \cite{Ki2007,To2020phd,DieKauKa2023}. Its broad
appeal stems from its ability to represent phenomena such as consensus
formation, social influence, and political polarisation using only
a small number of interaction parameters.

A natural extension of the Curie-Weiss model arises in settings where
the population cannot reasonably be regarded as homogeneous. In such
cases, the population is partitioned into $M\in\IN$ distinct groups,
each of which may exhibit different interaction characteristics. The
resulting multi-group Curie-Weiss model allows interaction strengths
to vary between groups, thereby capturing heterogeneous social or
physical environments. This framework has attracted increasing attention
over the years (see, for example, \cite{FedeCont2011,KnLoScSi2020,KirsToth2022b})
and has been applied to problems including opinion dynamics, voting
theory \cite{KT2021c}, and statistical community detection \cite{BRS2019,LoweSchu2020,BaMePeTo2023}.
The interaction between individuals is governed by the so-called coupling
parameters: higher values of the coupling parameters results in stronger
interaction, which means that voters pay more attention to the opinions
of their fellow voters, thus giving rise to higher levels of social
cohesion, ideological alignment, or cultural similarity across subpopulations.

The problem of statistical estimation of coupling parameters has been
investigated for several decades, beginning with the classical work
\cite{Kastelei1956} and continuing with more recent studies such
as \cite{GalBarCo2009} and \cite{FedVerCo2013}. These contributions
considered estimation based on observations from the entire population.
More recently, a series of articles has focused on the problem of
estimating the coupling parameters of a Curie-Weiss model with several
independent groups: \cite{BaMeSiTo2025} established that the maximum
likelihood estimator is consistent, asymptotically normal, and satisfies
a large deviation principle, thereby a theoretical foundation for
the the use of the maximum likelihood estimator in this context. Nevertheless,
its practical implementation is hindered by the need to evaluate the
partition function, whose computational complexity grows exponentially
with the population size. To overcome this limitation, \cite{BalNauTo2025}
introduced a computationally efficient approximation to the maximum
likelihood estimator that is valid in the large-population limit.
This approximation can be computed from complete voting data with
constant computational complexity and retains the principal asymptotic
properties of the maximum likelihood estimator, although consistency
requires the population to be sufficiently large. Despite its computational
advantages, the estimator still relies on observations from the full
population. Finally, \cite{BalNauTo2025a} analyses estimators which
can be calculated from (potentially very small) subsets of votes observed
at the cost of higher variance of the estimators.

The present article analyses the multi-group Curie-Weiss model with
two groups and correlated voting across group boundaries, thus addressing
the limitation in previous work to independent groups. The model can
be found in Definition \ref{def:CWM}. This model has three coupling
parameters: $J_{1,1}$, which controls interactions between voters
belonging to group 1, $J_{2,2}$, which does the same job for group
2, and $J_{1,2}$ which determines the interactions between voters
belonging to different groups. Thus, in the simplest case of two interacting
groups of voters, we have one additional parameter to estimate compared
to the independent group case. As we will see, this added complexity
leads to some limitations on the estimation of the coupling parameters
in the low temperature regime of the model (see Definition \ref{def:regimes}),
which corresponds to strong interactions between voters.

Section \ref{sec:model} presents the multi-group model whose coupling
parameters are the subject of this study. Section \ref{sec:Sampling}
discusses assumptions which underlie the sampling of observations.
The estimators we define and analyse can be found in Section \ref{sec:Estimation},
Definitions \ref{def:high_temp_est} and \ref{def:low_temp_est}. Our
results are presented in Section \ref{sec:Results}, consisting of
a theoretical subsection and a subsection about numerical simulations.
The main theoretical contribution is Theorem \ref{thm:limit} which
analyses the statistical properties of our estimators.

This work lies at the intersection of statistical mechanics, statistical
inference, and institutional design. By extending inference methods
to the case of correlated voting across group boundaries using the
multi-group Curie-Weiss models, it contributes to an expanding body
of research that applies ideas from statistical physics to questions
of collective decision-making, representation, and fairness. At the
same time, it provides rigorous methodological tools for analysing
and designing voting systems that reflect empirically observed patterns
of group behaviour, complemented by numerical simulations which illustrates
the applicability of the estimators to groups of fairly small size
all the way to large countries. The presentation is intended to remain
accessible to researchers from a range of disciplines, including mathematics,
statistics, physics, economics, and political science, by combining
mathematical precision with conceptual exposition.

\section{\label{sec:model}The Curie-Weiss Model}

Assume there are two groups of voters of size $N_{1}\in\IN$ and $N_{2}\in\IN$,
respectively. Each voter is indexed by the group $\lambda\in\IN_{2}$
and by $i\in\IN_{N_{\lambda}}$, with the vote being a random variable
$X_{\lambda,i}$ taking values in $\left\{ -1,1\right\} $. The voters
take into account the opinions by all other voters in both groups.
This mutual influence is represented mathematically by a coupling
matrix.

We define the coupling matrix $J$ to be a positive definite $2\times2$
matrix
\begin{equation}
J\coloneq\left(J_{\lambda,\mu}\right)_{\lambda,\mu=1,2}\in\IR^{2\times2}.\label{eq:J}
\end{equation}

\begin{notation}
For any symmetric matrix $A$ with real entries, we will write $A>0$
for the fact that $A$ is positive definite and $A\geq0$ if $A$
is positive semi-definite.
\end{notation}

For any voting configuration $\left(x_{1,1},\ldots,x_{1,N_{1}},x_{2,1},\ldots,x_{2,N_{2}}\right)\in\left\{ -1,1\right\} ^{N}$,
the Hamiltonian $\mathbb{H}:\left\{ -1,1\right\} ^{N}\rightarrow\IR$
is given by 
\begin{equation}
\mathbb{H}\left(x_{11},\ldots,x_{1N_{1}},x_{21},\ldots,x_{2N_{2}}\right)\coloneq-\frac{1}{2}\sum^{2}_{\lambda,\mu=1}\frac{J_{\lambda,\mu}}{\sqrt{N_{\lambda}N_{\mu}}}\sum^{N_{\lambda}}_{i=1}\sum^{N_{\mu}}_{j=1}x_{\lambda,i}x_{\mu,j}.\label{eq:Hamiltonian}
\end{equation}

\begin{defn}
\label{def:CWM}Let $J\in\IR^{2\times2}$ with $J>0$ and $\mathbb{H}$
as defined in (\ref{eq:Hamiltonian}). The block spin Ising model's
probability measure $\mathbb{P}$, which gives the probability of
each of the $2^{N}$ voting configurations, is defined by 
\begin{align}
\mathbb{P}_{J,N_{1},N_{2}}\left(X_{1,1}=x_{1,1},\ldots,X_{2,N_{2}}=x_{2,N_{2}}\right) & :=Z^{-1}e^{-\mathbb{H}\left(x_{11},\ldots,x_{2N_{2}}\right)}\label{eq:CWM}
\end{align}
for all $\left(x_{1},\ldots,x_{N}\right)\in\left\{ -1,1\right\} ^{N}$,
where $Z$ is a normalisation constant which depends on $N_{1}$,
$N_{2}$, and $J$.
\end{defn}

Each group's collective decision is represented by the sum of all
votes, the so-called voting margin.
\begin{defn}
\label{def:voting_margins}For each group $\lambda\in\IN_{2}$, the
group voting margin $S_{\lambda}$ is defined by
\[
S_{\lambda}\coloneq\sum^{N_{\lambda}}_{i=1}X_{\lambda,i}.
\]
\end{defn}

\begin{notation}
Throughout this article, we will use the symbol $\IE X$ for the expectation
and $\IV X$ for the variance of some random variable $X$. Capital
letters such as $X$ will denote random variables, while lower case
letters such as $x$ will denote realisations of the corresponding
random variable. We will use the symbol $I$ for the identity matrix
whose dimension should be obvious from the context.
\end{notation}

We will be working with two regimes of the Ising block spin model:
\begin{defn}
\label{def:regimes}If $I-J>0$ holds, we say that the model is in
the\emph{ high temperature regime}. If $I-J\geq0$ does not hold,
then we say the model is in the \emph{low temperature regime}.
\end{defn}

\section{\label{sec:Sampling}Sampling}

Let $\Omega\coloneq\left\{ -1,1\right\} $ be the space of binary
choice each voter faces in each election or referendum. Then a single
vote of the entire population including both groups is an element
of
\[
\Omega_{N_{1}+N_{2}}\coloneq\left\{ -1,1\right\} ^{N_{1}+N_{2}}.
\]
We assume we have access to a sample of $n\in\IN$ i.i.d. observations
from the distribution in Definition \ref{def:CWM}. Any such sample
takes values in the sampling space
\[
\Omega^{n}_{N_{1}+N_{2}}\coloneq\prod^{n}_{i=1}\Omega_{N_{1}+N_{2}}.
\]
We define the statistic which will allow us to estimate the coupling
matrix $J$.
\begin{defn}
\label{def:T_stat}We define the statistic $T:\Omega^{n}_{N_{1}+N_{2}}\rightarrow\IR^{2\times2}$
for any realisation of the sample $x=\left(x^{(1)},\ldots,x^{(n)}\right)\in\Omega^{n}_{N_{1}+N_{2}}$
by
\[
T\left(x\right)\coloneq\frac{1}{n}\sum^{n}_{t=1}\left(\begin{array}{cc}
\left(\sum^{N_{1}}_{i=1}x^{(t)}_{1,i}\right)^{2} & \sum^{N_{1}}_{i=1}x^{(t)}_{1,i}\sum^{N_{2}}_{j=1}x^{(t)}_{2,j}\\
\sum^{N_{1}}_{i=1}x^{(t)}_{1,i}\sum^{N_{2}}_{j=1}x^{(t)}_{2,j} & \left(\sum^{N_{2}}_{j=1}x^{(t)}_{2,j}\right)^{2}
\end{array}\right).
\]
\end{defn}

\begin{defn}
\label{def:J_tilde}Assume $I-J>0$ holds. For all $N_{1},N_{2}\in\IN$,
we define $\tilde{J}_{N_{1},N_{2}}\in\IR^{2\times2}$ by the implicit
condition
\[
\left(I-\tilde{J}_{N_{1},N_{2}}\right)^{-1}=\left(\begin{array}{cc}
\IE_{J,N_{1},N_{2}}\frac{S^{2}_{1}}{N_{1}} & \IE_{J,N_{1},N_{2}}\frac{S_{1}S_{2}}{\sqrt{N_{1}N_{2}}}\\
\IE_{J,N_{1},N_{2}}\frac{S_{1}S_{2}}{\sqrt{N_{1}N_{2}}} & \IE_{J,N_{1},N_{2}}\frac{S^{2}_{2}}{N_{2}}
\end{array}\right).
\]
Now suppose $I-J\geq0$ not hold. For all $N_{1},N_{2}\in\IN$, we
define $\tilde{m}_{N_{1},N_{2}}\in\IR^{2\times2}$ by the implicit
condition
\[
\tilde{m}_{N_{1},N_{2}}=\left(\begin{array}{cc}
\IE_{J,N_{1},N_{2}}\left(\frac{S_{1}}{N_{1}}\right)^{2} & \IE_{J,N_{1},N_{2}}\frac{S_{1}S_{2}}{N_{1}N_{2}}\\
\IE_{J,N_{1},N_{2}}\frac{S_{1}S_{2}}{N_{1}N_{2}} & \IE_{J,N_{1},N_{2}}\left(\frac{S_{2}}{N_{2}}\right)^{2}
\end{array}\right).
\]
\end{defn}

The above definition is supported by the following limit result.
\begin{prop}
\label{prop:large_pop_approx}If $I-J>0$ holds, then we have, for
all $\lambda,\nu\in\IN_{2}$,
\[
\IE_{J,N_{1},N_{2}}\frac{S_{\lambda}S_{\nu}}{\sqrt{N_{\lambda}N_{\nu}}}\xrightarrow[N_{\lambda},N_{\nu}\rightarrow\infty]{}\left(\left(I-J\right)^{-1}\right)_{\lambda,\nu}.
\]
 If $I-J\geq0$ does not hold and $J_{1,2}\neq0$\footnote{If $J_{1,2}=0$, the two groups are independent and we can estimate
the coupling constants $J_{1,1}$, $J_{2,2}$ separately. See articles.}, then there is a unique point $\left(m_{1},m_{2}\right)\in\IR^{2}$
with $m_{1}>0$ and $m_{2}\neq0$ such that
\[
\IE_{J,N_{1},N_{2}}\frac{S_{\lambda}S_{\nu}}{N_{\lambda}N_{\nu}}\xrightarrow[N_{\lambda},N_{\nu}\rightarrow\infty]{}m_{\lambda}m_{\nu}.
\]
\end{prop}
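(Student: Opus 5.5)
The approach is a Hubbard--Stratonovich transformation, which turns the problem into a two-dimensional Laplace-type integral. Write $D\coloneq\mathrm{diag}(N_1,N_2)$ and $S=(S_1,S_2)^{T}$. Then $-\mathbb{H}=\frac12\langle D^{-1/2}S,JD^{-1/2}S\rangle$. Because $J>0$, we have the Gaussian identity
\[
e^{\frac12\langle v,Jv\rangle}=\frac{1}{2\pi\sqrt{\det J}}\int_{\IR^2}e^{-\frac12\langle y,J^{-1}y\rangle+\langle y,v\rangle}\,dy .
\]
We apply it with $v=D^{-1/2}S$ and sum over configurations; the votes then factorise given $y$. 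Hence $\IP_{J,N_1,N_2}$ is the law of conditionally i.i.d.\ spins with mixing density proportional to
\[
\exp\Bigl(-\tfrac12\langle y,J^{-1}y\rangle+\sum_\lambda N_\lambda\log\cosh\bigl(y_\lambda/\sqrt{N_\lambda}\bigr)\Bigr).
\]
Conditionally on $y$, the mean is $\IE[S_\lambda\mid y]=N_\lambda\tanh(y_\lambda/\sqrt{N_\lambda})$. The conditional variance is $N_\lambda(1-\tanh^2)$, and $S_1,S_2$ are conditionally independent. This reduces both moment limits to asymptotics of $\IE$ of explicit functions of $y$ under this density.

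\textbf{High temperature.} Keep the scaling $y$ as it is. Pointwise,
\[
N_\lambda\log\cosh\bigl(y_\lambda/\sqrt{N_\lambda}\bigr)\to y_\lambda^2/2 ,
\]
so the exponent tends to $-\frac12\langle y,(J^{-1}-I)y\rangle$. This is a proper Gaussian because $J^{-1}-I>0$, which is equivalent to $I-J>0$ when $J>0$. For a dominating function, use $\log\cosh t\le t^2/2$: the integrand is then bounded by $e^{-\frac12\langle y,(J^{-1}-I)y\rangle}$ times polynomials. Dominated convergence therefore applies, with the bound uniform in $N_1,N_2$. Next,
\[
\IE\Bigl[\tfrac{S_\lambda S_\nu}{\sqrt{N_\lambda N_\nu}}\Bigm| y\Bigr]=\sqrt{N_\lambda N_\nu}\,\tanh\bigl(y_\lambda/\sqrt{N_\lambda}\bigr)\tanh\bigl(y_\nu/\sqrt{N_\nu}\bigr)+\delta_{\lambda\nu}\bigl(1-\tanh^2(y_\lambda/\sqrt{N_\lambda})\bigr).
\]
This converges to $y_\lambda y_\nu+\delta_{\lambda\nu}$ and is bounded by $|y_\lambda y_\nu|+1$. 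The limit is therefore $C+I$, where $C=(J^{-1}-I)^{-1}$. Finally,
\[
(J^{-1}-I)^{-1}+I=(I-J)^{-1}J+(I-J)^{-1}(I-J)=(I-J)^{-1},
\]
which is the first claim.

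\textbf{Low temperature.} Rescale $y_\lambda=\sqrt{N_\lambda}\,z_\lambda$. This is not exactly a single large parameter; assume $N_\lambda/(N_1+N_2)\to\alpha_\lambda>0$, or handle a general joint limit by working with $N_\lambda$ separately. The density of $z$ is then proportional to $e^{-F_N(z)}$ with
\[
F_N(z)=\tfrac12\langle D^{1/2}z,J^{-1}D^{1/2}z\rangle-\sum_\lambda N_\lambda\log\cosh z_\lambda ,
\]
and the quantity of interest is
\[
\IE\frac{S_\lambda S_\nu}{N_\lambda N_\nu}=\IE\bigl[\tanh z_\lambda\tanh z_\nu\bigr]+O(1/N).
\]
The core step is a Laplace-method analysis of $F_N$. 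Its critical points satisfy $J^{-1}D^{1/2}z=D^{-1/2}\tanh(z)\cdot$(appropriately scaled). Equivalently, in the variables $m=\tanh z$, they are the solutions of the mean-field equations $m_\lambda=\tanh\bigl(\sum_\mu J_{\lambda\mu}\sqrt{\alpha_\mu/\alpha_\lambda}\,m_\mu\bigr)$ in the limit; in the $z$ variables the critical points satisfy $z=\tanh$-related fixed point equations.

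I would show three things. First, when $I-J\not\ge0$, the origin is no longer a minimiser: the Hessian there is $D^{1/2}(J^{-1}-I)D^{1/2}$, which has a negative direction. Second, the global minima form exactly one symmetric pair $\pm z^*$. This uses $F_N(z)=F_N(-z)$, together with the sign of $J_{1,2}$: flipping the sign of $z_2$ alone when $J_{1,2}<0$ reduces to the ferromagnetic case, where a concavity argument in the variables $(z_1^2,z_2^2)$ gives uniqueness of the positive-quadrant minimiser. Third, these minima are nondegenerate. Laplace's method then concentrates the measure on $\{z^*,-z^*\}$ with equal weights. Since $\tanh z_\lambda\tanh z_\nu$ is even, the expectation tends to $m_\lambda m_\nu$ with $m=\tanh z^*$, choosing the sign so that $m_1>0$. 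We get $m_2\neq0$ because $J_{1,2}\neq0$ couples the coordinates: if $m_2=0$, the second equation would force $J_{2,1}m_1=0$.

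\textbf{Main obstacle.} The hardest step is the uniqueness and nondegeneracy of the symmetric pair of global minimisers in the low-temperature case. I would attempt it by the sign-flip reduction to $J_{1,2}>0$ followed by a Griffiths/GHS-type concavity argument for the mean-field equations. If that fails, I would fall back on the approach of the earlier multi-group Curie-Weiss literature cited in the introduction.
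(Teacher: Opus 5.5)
The paper contains no proof of Proposition \ref{prop:large_pop_approx}. It states the result as background, evidently relying on the cited multi-group Curie--Weiss literature, and uses it only in the proof of Theorem \ref{thm:limit}. Your argument is therefore a self-contained substitute rather than an alternative to an in-paper proof.

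Your high-temperature part is complete and correct. The Hubbard--Stratonovich mixture, the $N$-independent Gaussian domination from $\log\cosh t\le t^2/2$ and $|\tanh t|\le|t|$, and the identity $(J^{-1}-I)^{-1}+I=(I-J)^{-1}$ together give the claim for an arbitrary joint limit $N_1,N_2\to\infty$. No assumption on $N_1/N_2$ is needed there.

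In the low-temperature part, your hypothesis $N_\lambda/(N_1+N_2)\to\alpha_\lambda>0$ is a necessary correction of the statement, not a convenience. The limiting equations $m_\lambda=\tanh\bigl(\sum_\mu J_{\lambda,\mu}\sqrt{\alpha_\mu/\alpha_\lambda}\,m_\mu\bigr)$ depend on $\alpha$: for $J=\left(\begin{smallmatrix}2&1\\1&2\end{smallmatrix}\right)$ one gets $m_1=m_2$ exactly when $\alpha_1=\alpha_2$. So along sequences with oscillating $N_1/N_2$ there is no limit, and your fallback of handling a general joint limit cannot work.

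The step you flag as the main obstacle does close as you suggest. After the sign flip, so that $(J^{-1})_{1,2}<0$, restrict the limiting functional $F_\alpha$ to the closed quadrant and write it in $u=(z_1^2,z_2^2)$. It is a sum of three pieces:
\begin{itemize}
\item a linear term;
\item $\sqrt{\alpha_1\alpha_2}\,(J^{-1})_{1,2}\sqrt{u_1u_2}$, which is convex because the geometric mean is concave;
\item $-\sum_\lambda\alpha_\lambda\log\cosh\sqrt{u_\lambda}$, which is strictly convex because $\tanh s/s$ is strictly decreasing.
\end{itemize}
Hence there is a unique minimiser $u^*$, and $u^*\neq0$ because the origin is not a local minimum. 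Points with $z_1z_2<0$ are strictly worse when $J_{1,2}\neq0$, so the global minimisers are exactly $\pm z^*$.

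Your third item, nondegeneracy, is not needed. For the first-order limit of $\IE[\tanh z_\lambda\tanh z_\nu]$ it suffices that $e^{-F_N}$ concentrates on the set of global minimisers. That follows from uniqueness, uniform coercivity ($J^{-1}>0$ and $\log\cosh z\le|z|$), and locally uniform convergence of $F_N/(N_1+N_2)$ to $F_\alpha$.

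Two minor points. The critical-point equation is $J^{-1}D^{1/2}z=D^{1/2}\tanh z$. And uniqueness of $(m_1,m_2)$ in the statement is automatic once the rank-one limit exists and $m_1>0$ is imposed.
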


By Proposition \ref{prop:large_pop_approx}, in the high temperature
regime $\tilde{J}_{N_{1},N_{2}}$ approximates the coupling matrix
$J$ well, provided the group populations $N_{1}$ and $N_{2}$ are
large. In the low temperature regime, we do not have a limit directly
in terms of $J$, but the magnitude and sign of $m_{1},m_{2}$ allow
certain conclusions regarding the coupling constants: the larger in
magnitude the coupling constants $J_{1,1},J_{2,2},\left|J_{1,2}\right|,$the
larger the limits $m_{1},\left|m_{2}\right|$. $m_{1}$ and $m_{2}$
each depend on all three values $J_{1,1},J_{2,2},J_{1,2}$, so that
in general a larger value of $m_{1}$ is not necessarily due to $J_{1,1}$
being larger, it could also be due to a larger value of $J_{2,2}$
or $J_{1,2}$. The sign of $m_{2}$ is the same as that of $J_{1,2}$.
Hence, if the voters in different groups are negatively coupled, we
will see this reflected as a negative sign of $m_{2}$ and a negative
limit $m_{1}m_{2}$ of $\IE_{J,N_{1},N_{2}}\frac{S_{1}S_{2}}{N_{1}N_{2}}$
as $N_{\lambda},N_{\nu}\rightarrow\infty$.

\section{\label{sec:Estimation}Estimation of $J$}

The estimator we will define is based on Proposition \ref{prop:large_pop_approx}
and Definition \ref{def:J_tilde}. 
\begin{defn}
\label{def:high_temp_est}Let $N_{1},N_{2}\in\IN$ and $J>0$. We
set
\[
\boldsymbol{N}\coloneq\left(\begin{array}{cc}
\sqrt{N_{1}} & 0\\
0 & \sqrt{N_{2}}
\end{array}\right).
\]
The estimator $\hat{J}_{N_{1},N_{2},n}:\Omega^{n}_{N_{1}+N_{2}}\rightarrow\IR^{2\times2}\cup\left\{ \textup{\ensuremath{\ell}}\right\} $
is defined as follows: for all $x\in\Omega^{n}_{N_{1}+N_{2}}$ such
that $T\left(x\right)$ is invertible and $I-\boldsymbol{N}T\left(x\right)^{-1}\boldsymbol{N}>0$
holds, we set 
\[
\hat{J}_{N_{1},N_{2},n}\left(x\right)=I-\boldsymbol{N}T\left(x\right)^{-1}\boldsymbol{N}.
\]
For all other $x\in\Omega^{n}_{N_{1}+N_{2}}$, we set $\hat{J}_{N_{1},N_{2},n}\left(x\right)\coloneq\ell$.
\end{defn}

Above, we use the symbol $\ell$ to indicate that the information
in a sample does not allow us to estimate a coupling matrix in the
high temperature regime and we therefore conclude that it is likely
the coupling matrix $J$ belongs to the low temperature regime. In
that case, we can estimate the concentration point $\left(m_{1},m_{2}\right)$.

Now we turn to the estimation of $\left(m_{1},m_{2}\right)$.
\begin{defn}
\label{def:low_temp_est}Let $N_{1},N_{2}\in\IN$ and $J>0$. The
estimator $\hat{m}_{N_{1},N_{2},n}:\Omega^{n}_{N_{1}+N_{2}}\rightarrow\IR^{2\times2}$
is defined for all $x\in\Omega^{n}_{N_{1}+N_{2}}$ such that $T\left(x\right)$
is singular or $I-\boldsymbol{N}T\left(x\right)^{-1}\boldsymbol{N}>0$
does not hold by
\[
\hat{m}_{N_{1},N_{2},n}\left(x\right)\coloneq\boldsymbol{N}^{-1}\boldsymbol{N}^{-1}T\left(x\right)\boldsymbol{N}^{-1}\boldsymbol{N}^{-1}.
\]
For all other $x\in\Omega^{n}_{N_{1}+N_{2}}$, we set $\hat{m}_{N_{1},N_{2},n}\left(x\right)\coloneq0$.
\end{defn}

\section{\label{sec:Results}Results}

\subsection{Asymptotic Behaviour of the Estimators}
\begin{thm}
\label{thm:limit}Let $N_{1},N_{2}\in\IN$, $J>0$, and $I-J>0$.
Then the following statements hold:
\begin{enumerate}
\item $\hat{J}_{N_{1},N_{2},n}\xrightarrow[n\rightarrow\infty]{\textup{p}}\tilde{J}_{N_{1},N_{2}}$.
\item $\tilde{J}_{N_{1},N_{2}}\xrightarrow[N_{1},N_{2}\rightarrow\infty]{}J$.
\item Let $\text{\underbar{\ensuremath{\hat{J}}}\ensuremath{\coloneq}\ensuremath{\left(\left(\hat{J}_{N_{1},N_{2},n}\right)_{1,1},\left(\hat{J}_{N_{1},N_{2},n}\right)_{2,2},\left(\hat{J}_{N_{1},N_{2},n}\right)_{1,2}\right)}}$
and $\underbar{\ensuremath{\tilde{J}}}\coloneq\left(\left(\tilde{J}_{N_{1},N_{2}}\right)_{1,1},\left(\tilde{J}_{N_{1},N_{2}}\right)_{2,2},\left(\tilde{J}_{N_{1},N_{2}}\right)_{1,2}\right)$.
Then $\sqrt{n}\left(\underbar{\ensuremath{\hat{J}}}-\underbar{\ensuremath{\tilde{J}}}\right)\xrightarrow[n\rightarrow\infty]{\textup{p}}\mathcal{N}\left(0,C\right)$
, where $C$ is a non-singular $3\times3$ covariance matrix given
in (\ref{eq:lim_cov}).
\end{enumerate}
Now let $I-J\geq0$ not hold. Set $\underbar{\ensuremath{\hat{m}}}\ensuremath{\coloneq}\left(\left(\hat{m}_{N_{1},N_{2},n}\right)_{1,1},\left(\hat{m}_{N_{1},N_{2},n}\right)_{2,2},\left(\hat{m}_{N_{1},N_{2},n}\right)_{1,2}\right)$
and\\
$\ensuremath{\tilde{m}}\ensuremath{\coloneq}\left(\left(\tilde{m}_{N_{1},N_{2}}\right)_{1,1},\left(\tilde{m}_{N_{1},N_{2}}\right)_{2,2},\left(\tilde{m}_{N_{1},N_{2}}\right)_{1,2}\right)$.
\begin{enumerate}
\item $\hat{m}_{N_{1},N_{2},n}\xrightarrow[n\rightarrow\infty]{\textup{p}}\tilde{m}_{N_{1},N_{2}}$.
\item $\tilde{m}_{N_{1},N_{2}}\xrightarrow[N_{1},N_{2}\rightarrow\infty]{}\left(\begin{array}{cc}
m^{2}_{1} & m_{1}m_{2}\\
m_{1}m_{2} & m^{2}_{2}
\end{array}\right)$.
\item Then $\sqrt{n}\left(\underbar{\ensuremath{\hat{m}}}-\ensuremath{\tilde{m}}\right)\xrightarrow[n\rightarrow\infty]{\textup{p}}\mathcal{N}\left(0,D\right)$,
where $D$ is the covariance matrix of the random vector $\left(\left(\frac{S_{1}}{N_{1}}\right)^{2},\left(\frac{S_{2}}{N_{2}}\right)^{2},\frac{S_{1}S_{2}}{N_{1}N_{2}}\right)$.
\end{enumerate}
\end{thm}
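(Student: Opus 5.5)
The plan is to treat everything at fixed $N_1,N_2$ as a finite-dimensional i.i.d. problem. The law of large numbers and the central limit theorem apply to the bounded random matrices
\[
Y^{(t)}\coloneq \boldsymbol{N}^{-1}\left(\begin{array}{cc} S^{(t)2}_{1} & S^{(t)}_{1}S^{(t)}_{2}\\ S^{(t)}_{1}S^{(t)}_{2} & S^{(t)2}_{2}\end{array}\right)\boldsymbol{N}^{-1},
\]
and the results are then transported through smooth maps by the continuous mapping theorem and the delta method. The limits $N_1,N_2\to\infty$ come directly from Proposition \ref{prop:large_pop_approx}.

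\textbf{High temperature, statements 1 and 3.} By the weak law of large numbers, $\boldsymbol{N}^{-1}T\boldsymbol{N}^{-1}\to\Sigma\coloneq\IE Y^{(1)}$ in probability, and by Definition \ref{def:J_tilde}, $\Sigma=(I-\tilde J_{N_1,N_2})^{-1}$. I need $\Sigma$ to be invertible, and I also need $\tilde J$ to satisfy the strict condition $I-\tilde J>0$, which is equivalent to $\Sigma>0$ with $\Sigma^{-1}<I$.

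Positive definiteness of $\Sigma$ holds at finite $N$ because $(S_1,S_2)$ is not supported on a line: all configurations have positive probability. The inequality $\Sigma^{-1}<I$ is not automatic at finite $N$. I would take it as implicit in Definition \ref{def:J_tilde}, which assumes $\tilde J$ is well defined; if necessary I would restrict to $N_1,N_2$ large, using statement 2.

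The set $U=\{A>0:\ I-A^{-1}>0\}$ is open. Hence $\IP(\boldsymbol{N}^{-1}T\boldsymbol{N}^{-1}\in U)\to1$, so the value $\ell$ is taken with vanishing probability. On $U$ the map $g(A)=I-A^{-1}$ is continuous, which gives statement 1.

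For statement 3, the multivariate CLT applied to the three distinct entries of $Y^{(t)}$ gives $\sqrt n(\underline{Y}_n-\underline\Sigma)\Rightarrow\mathcal N(0,\Gamma)$, where $\Gamma$ is the covariance of $(S_1^2/N_1,\,S_2^2/N_2,\,S_1S_2/\sqrt{N_1N_2})$. The delta method with $Dg(\Sigma)[H]=\Sigma^{-1}H\Sigma^{-1}$ then yields $C=G\Gamma G^{T}$, where $G$ is the $3\times3$ matrix of $H\mapsto\Sigma^{-1}H\Sigma^{-1}$ restricted to symmetric $H$. The event of value $\ell$ has probability tending to zero, so it does not affect the limit in distribution. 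The stated convergence ``in probability'' to a normal law should be read as convergence in distribution.

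\textbf{Nonsingularity of $C$ (main obstacle).} $G$ is invertible because congruence by the invertible matrix $\Sigma^{-1}$ is a bijection on symmetric matrices. So the task reduces to showing $\Gamma>0$, meaning no nontrivial linear combination $aS_1^2+bS_2^2+cS_1S_2$ is almost surely constant. Since every configuration has positive probability, $(S_1,S_2)$ ranges over the full grid $\{-N_1,-N_1+2,\dots,N_1\}\times\{-N_2,\dots,N_2\}$. A quadratic form vanishing up to a constant on that grid must be zero once each of $N_1,N_2$ is at least $2$, because it then takes at least three values in each coordinate. I would verify this by evaluating at points such as $(N_1,0)$ versus $(N_1-2,0)$ when parities allow, and otherwise with general grid points. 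Small cases such as $N_\lambda=1$, where $S_\lambda^2\equiv1$, make $\Gamma$ degenerate; these must be excluded or handled separately, and the argument is most delicate there.

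\textbf{Statement 2 and the low temperature part.} Statement 2 follows from Proposition \ref{prop:large_pop_approx}: $\Sigma_{N_1,N_2}\to(I-J)^{-1}$, and inversion is continuous at an invertible matrix, so $\tilde J\to J$.

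In the low temperature case, $\hat m$ equals $\boldsymbol{N}^{-2}T\boldsymbol{N}^{-2}$, an empirical mean of the bounded vectors $((S_1/N_1)^2,(S_2/N_2)^2,S_1S_2/(N_1N_2))$, on the event where the high-temperature estimator is undefined. The LLN gives convergence to $\tilde m$. By the open-set argument above, this event has probability tending to one provided $\Sigma\notin \overline{U}$-type degenerate boundary cases, that is, $I-\Sigma^{-1}$ is not positive definite. I expect this to hold at least for large $N$, since $\Sigma$ then grows like $N$ times a rank-one matrix; this is a secondary point to check. The CLT gives the $D$ statement directly with no delta method needed, and Proposition \ref{prop:large_pop_approx} gives the low temperature version of statement 2.
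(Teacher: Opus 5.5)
\textbf{High temperature and statement 2.} Here you follow the paper's route: law of large numbers plus continuous mapping, central limit theorem plus delta method, and Proposition \ref{prop:large_pop_approx} for $N_1,N_2\to\infty$. You carry it out more carefully than the paper does:
\begin{itemize}
\item Working with $\boldsymbol{N}^{-1}T\boldsymbol{N}^{-1}$ fixes the paper's slip: it is this matrix, not $T$, that converges to $(I-\tilde J_{N_1,N_2})^{-1}$.
\item You control the $\ell$-event.
\item You actually prove that $C$ is nonsingular ($G$ invertible, $\Gamma>0$ for $N_1,N_2\geq 2$), which the paper only asserts.
\end{itemize}

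One correction is needed. The condition in Definition \ref{def:high_temp_est} is $\hat J_{N_1,N_2,n}>0$. So what you need is $\tilde J_{N_1,N_2}>0$, i.e.\ $\Sigma>I$. The condition $I-\tilde J_{N_1,N_2}=\Sigma^{-1}>0$ is automatic, and $U=\{A:\ A>I\}$. The requirement $\tilde J_{N_1,N_2}>0$ is not implicit in Definition \ref{def:J_tilde}, which only needs $\Sigma$ invertible. For $N_1=1$ one has $\Sigma_{1,1}=1$, so $\Sigma-I$ is not positive definite, and $\tilde J_{N_1,N_2}$ lies outside the range where $\hat J_{N_1,N_2,n}$ is defined. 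Your fallback, restricting to large $N_1,N_2$ via statement 2, is the right repair. It also disposes of the degenerate cases $N_\lambda=1$ for $\Gamma$. The theorem as stated, for all $N_1,N_2$, is too strong.

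\textbf{The gap: low temperature.} The point you defer as secondary is where the argument fails. Statements 1 and 3 there require $\IP(\boldsymbol{N}^{-1}T\boldsymbol{N}^{-1}\notin U)\to 1$. By your open-set argument this amounts to $\lambda_{\min}(\Sigma)<1$. Your heuristic that $\Sigma$ grows like $N$ times a rank-one matrix only controls the \emph{large} eigenvalue.

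The small eigenvalue converges to the variance of the Gaussian fluctuations of $(S_1/\sqrt{N_1},S_2/\sqrt{N_2})$ transverse to the magnetisation. Namely, it tends to $e^{T}Ke$, where
\begin{itemize}
\item $K=(D-J)^{-1}$,
\item $D=\mathrm{diag}\bigl((1-m_1^2)^{-1},(1-m_2^2)^{-1}\bigr)$,
\item $e$ is a unit vector orthogonal to $(m_1,m_2)$ (for $N_1=N_2$).
\end{itemize}
Nothing forces this below $1$. As $\lambda_{\max}(J)\downarrow 1$ with $0<\lambda_{\min}(J)<1$ fixed, it tends to $(1-\lambda_{\min}(J))^{-1}>1$.

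A concrete case: take $J_{1,1}=J_{2,2}=0.775$, $J_{1,2}=0.275$ (eigenvalues $1.05$ and $0.5$) and $N_1=N_2$. Then $m_1=m_2\approx 0.37$ and $e^{T}Ke\approx 1/0.66\approx 1.5$. Consequently:
\begin{itemize}
\item $\Sigma-I$ is positive definite for all large $N$;
\item $\hat J_{N_1,N_2,n}$ is defined with probability tending to one;
\item $\hat m_{N_1,N_2,n}\to 0\neq\tilde m_{N_1,N_2}$ in probability.
\end{itemize}

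So low-temperature statements 1 and 3 need an extra hypothesis such as $\lambda_{\min}(\Sigma)<1$. This does hold deep in that regime: for the simulation matrix (\ref{eq:low_temp_example}) one gets $e^{T}Ke\approx 0.013$. The paper's proof has the same hole. It passes from the law of large numbers for $T$ straight to $\hat m_{N_1,N_2,n}\to\tilde m_{N_1,N_2}$ and never addresses the case distinction in Definition \ref{def:low_temp_est}.
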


\subsection{Simulations}

In addition to the theoretical result, we simulated the estimation
of the coupling parameters for values in the high and the low temperature
regime. The main questions we wanted to answer were
\begin{enumerate}
\item How likely is it that the estimator is realised in the wrong regime
(i.e. a low temperature result when the actual coupling matrix is
in high temperature regime or vice versa)?
\item How likely is a deviation of a certain standard magnitude from the
true parameter values for different sample sizes?
\end{enumerate}
We chose group sizes $N_{1}=N_{2}=100$ for all our simulations.

\subsubsection{High Temperature Regime}

First, we picked a high temperature coupling matrix:
\begin{equation}
J=\left(\begin{array}{cc}
0.6 & 0.3\\
0.3 & 0.5
\end{array}\right).\label{eq:high_temp_example}
\end{equation}
For this $J$, we generated 100 samples of size $n=20$ each. The
regime was correctly identified as high temperature from 73\% of the
samples.

We employed the norm $\left\Vert \cdot\right\Vert :\IR^{2\times2}\rightarrow\IR,\left\Vert A\right\Vert \coloneq\max\left\{ \left|A_{i,j}\right|,i,j\in\IN_{2}\right\} $,
$A\in\IR^{2\times2}$, to measure the distance $\left\Vert \hat{J}_{N_{1},N_{2},n}\left(x\right)-J\right\Vert $
for each sample generated. The estimate $\hat{J}_{N_{1},N_{2},n}\left(x\right)$
lay within a tolerance of 0.1 of the true coupling matrix in (\ref{eq:high_temp_example})
20\% of the time.

Next, we generated 100 samples of size $n=100$ each. The regime was
correctly identified as high temperature from 96\% of the samples.
The estimate $\hat{J}_{N_{1},N_{2},n}\left(x\right)$ lay within a
tolerance of 0.1 of the true coupling matrix in (\ref{eq:high_temp_example})
64\% of the time.

Finally, we generated 100 samples of size $n=500$ each. The regime
was correctly identified as high temperature from all samples. The
estimate $\hat{J}_{N_{1},N_{2},n}\left(x\right)$ lay within a tolerance
of 0.1 of the true coupling matrix in (\ref{eq:high_temp_example})
95\% of the time.

See Table \ref{tab:high_temp} for a summary of these results.

\begin{table}
\begin{centering}
\renewcommand{\arraystretch}{1.5}%
\begin{tabular}{|c|c|c|}
\hline 
$n$ & Regime correctly identified & $\hat{J}_{N_{1},N_{2},n}\left(x\right)$ within tolerance\tabularnewline
\hline 
\hline 
20 & 73\% & 27\%\tabularnewline
\hline 
100 & 96\% & 64\%\tabularnewline
\hline 
500 & 100\% & 96\%\tabularnewline
\hline 
\end{tabular}
\par\end{centering}
\caption{Simulation results for high temperature coupling matrix (\ref{eq:high_temp_example})\label{tab:high_temp}}

\end{table}

\subsubsection{Low Temperature Regime}

We picked a low temperature coupling matrix:
\begin{equation}
J=\left(\begin{array}{cc}
2 & 1\\
1 & 1.8
\end{array}\right).\label{eq:low_temp_example}
\end{equation}

By Proposition \ref{prop:large_pop_approx}, for the low temperature
regime, there is a point $m=\left(m_{1},m_{2}\right)\in\IR^{2}$ with
$m_{1}>0$ and $m_{2}\neq0$ such that
\[
\lim_{N_{1},N_{2}\rightarrow\infty}\IE\left(\frac{S_{\lambda}}{N_{\lambda}}\right)^{2}=x^{2}_{\lambda}.
\]
We can use the above relation to estimate the concentration point
$m$ from a sample of voting configurations from the distribution
in Definition \ref{def:CWM} with the coupling matrix in (\ref{eq:low_temp_example}).
Our estimator for $\left(m^{2}_{1},m^{2}_{2}\right)$ is $\left(\left(\underbar{\ensuremath{\hat{m}}}\right)^{2}_{1},\left(\underbar{\ensuremath{\hat{m}}}\right)^{2}_{2}\right)$
and the estimator for $\textup{sign }m_{2}$ is $\textup{sign \ensuremath{\left(\underbar{\ensuremath{\hat{m}}}\right)_{3}}}$.

We generated 100 samples each of sizes 20, 100, and 500, respectively.
The low temperature regime of the model was correctly identified in
all cases. Similarly, the estimator $\left(\left(\underbar{\ensuremath{\hat{m}}}\right)^{2}_{1},\left(\underbar{\ensuremath{\hat{m}}}\right)^{2}_{2}\right)$
was realised within a tolerance of 0.1 of the true value $\left(m^{2}_{1},m^{2}_{2}\right)$
100\% of the time. See Table \ref{tab:low_temp}.

\begin{table}
\begin{centering}
\renewcommand{\arraystretch}{1.5}%
\begin{tabular}{|c|c|c|}
\hline 
$n$ & Regime correctly identified & $\left(T\left(x\right)_{1,1},T\left(x\right)_{2,2}\right)$ within
tolerance\tabularnewline
\hline 
\hline 
20 & 100\% & 100\%\tabularnewline
\hline 
100 & 100\% & 100\%\tabularnewline
\hline 
500 & 100\% & 100\%\tabularnewline
\hline 
\end{tabular}
\par\end{centering}
\caption{Simulation results for low temperature coupling matrix (\ref{eq:low_temp_example})\label{tab:low_temp}}
\end{table}

\section{Proof of Theorem \ref{thm:limit}}

\subsection{Statement 1}

Starting with the first statement of Theorem \ref{thm:limit}, we
use Proposition \ref{prop:conv_stat} for the proof.

Recall that $x=\left(x^{(1)},\ldots,x^{(n)}\right)\in\Omega_{N_{1}+N_{2}}$
refers to the sample of voting configurations we observe, the realisation
of the random variable $X=\left(X^{(1)},\ldots,X^{(n)}\right)$. The
weak law of large numbers
\begin{equation}
\left(T\left(X\right)\right)_{1,1}=\frac{1}{n}\sum^{n}_{t=1}\left(\sum^{N_{1}}_{i=1}X^{(t)}_{1,i}\right)^{2}\xrightarrow[n\rightarrow\infty]{\textup{p}}\mathbb{E}_{J,N_{1},N_{2}}S^{2}_{1}\label{eq:WLLN}
\end{equation}
holds because $\left(\sum^{N_{1}}_{i=1}X^{(t)}_{1,i}\right)^{2}$
is a bounded random variable, and thus the second moment exists. The
same argument yields the statements akin to (\ref{eq:WLLN}), and
therefore
\[
T\xrightarrow[n\rightarrow\infty]{\textup{p}}\left(\begin{array}{cc}
\IE_{J,N_{1},N_{2}}S^{2}_{1} & \IE_{J,N_{1},N_{2}}S_{1}S_{2}\\
\IE_{J,N_{1},N_{2}}S_{1}S_{2} & \IE_{J,N_{1},N_{2}}S^{2}_{2}
\end{array}\right).
\]
For the regime where $I-J\geq0$ does not hold, this shows the statement
$\hat{m}_{N_{1},N_{2},n}\xrightarrow[n\rightarrow\infty]{\textup{p}}\tilde{m}_{N_{1},N_{2}}$. 

Suppose $I-J>0$ holds. By Definition \ref{def:J_tilde}, we have
\[
\left(\begin{array}{cc}
\IE_{J,N_{1},N_{2}}\frac{S^{2}_{1}}{N_{1}} & \IE_{J,N_{1},N_{2}}\frac{S_{1}S_{2}}{\sqrt{N_{1}N_{2}}}\\
\IE_{J,N_{1},N_{2}}\frac{S_{1}S_{2}}{\sqrt{N_{1}N_{2}}} & \IE_{J,N_{1},N_{2}}\frac{S^{2}_{2}}{N_{2}}
\end{array}\right)=\left(I-\tilde{J}_{N_{1},N_{2}}\right)^{-1}.
\]
The last two displays combine to
\[
T\xrightarrow[n\rightarrow\infty]{\textup{p}}\left(I-\tilde{J}_{N_{1},N_{2}}\right)^{-1}
\]
Definition \ref{def:high_temp_est} states that the estimator $\hat{J}_{N_{1},N_{2},n}$
is given by
\[
\hat{J}_{N_{1},N_{2},n}=I-\boldsymbol{N}T^{-1}\boldsymbol{N}
\]
if $T$ is invertible and $I-\boldsymbol{N}T^{-1}\boldsymbol{N}>0$
holds.

The mapping $A\mapsto I-\boldsymbol{N}A^{-1}\boldsymbol{N}$ for invertible
$2\times2$ matrices $A$ is continuous. Hence, Theorem \ref{thm:cont_mapping}
yields
\[
\hat{J}_{N_{1},N_{2},n}\xrightarrow[n\rightarrow\infty]{\textup{p}}\tilde{J}_{N_{1},N_{2}}.
\]

\subsection{Statement 2}

Next, we prove the statements $\tilde{J}_{N_{1},N_{2}}\xrightarrow[N_{1},N_{2}\rightarrow\infty]{}J$
and $\tilde{m}_{N_{1},N_{2}}\xrightarrow[N_{1},N_{2}\rightarrow\infty]{}\left(\begin{array}{cc}
m^{2}_{1} & m_{1}m_{2}\\
m_{1}m_{2} & m^{2}_{2}
\end{array}\right)$.

Proposition \ref{prop:large_pop_approx} states that if $I-J>0$ holds,
then we have, for all $\lambda,\nu\in\IN_{2}$,
\[
\IE_{J,N_{1},N_{2}}\frac{S_{\lambda}S_{\nu}}{\sqrt{N_{\lambda}N_{\nu}}}\xrightarrow[N_{\lambda},N_{\nu}\rightarrow\infty]{}\left(\left(I-J\right)^{-1}\right)_{\lambda,\nu}.
\]
By Definition \ref{def:J_tilde}, for all $N_{1},N_{2}\in\IN$,
\[
\left(I-\tilde{J}_{N_{1},N_{2}}\right)^{-1}=\left(\begin{array}{cc}
\IE_{J,N_{1},N_{2}}\frac{S^{2}_{1}}{N_{1}} & \IE_{J,N_{1},N_{2}}\frac{S_{1}S_{2}}{\sqrt{N_{1}N_{2}}}\\
\IE_{J,N_{1},N_{2}}\frac{S_{1}S_{2}}{\sqrt{N_{1}N_{2}}} & \IE_{J,N_{1},N_{2}}\frac{S^{2}_{2}}{N_{2}}
\end{array}\right)\xrightarrow[N_{\lambda},N_{\nu}\rightarrow\infty]{}\left(I-J\right)^{-1},
\]
which, due to the continuity of the mapping $A\mapsto\left(I-A\right)^{-1}$,
is equivalent to $\tilde{J}_{N_{1},N_{2}}\xrightarrow[N_{\lambda},N_{\nu}\rightarrow\infty]{}J$.

Definition \ref{def:J_tilde} states that, for all $N_{1},N_{2}\in\IN$,
\[
\tilde{m}_{N_{1},N_{2}}=\left(\begin{array}{cc}
\IE_{J,N_{1},N_{2}}\left(\frac{S_{1}}{N_{1}}\right)^{2} & \IE_{J,N_{1},N_{2}}\frac{S_{1}S_{2}}{N_{1}N_{2}}\\
\IE_{J,N_{1},N_{2}}\frac{S_{1}S_{2}}{N_{1}N_{2}} & \IE_{J,N_{1},N_{2}}\left(\frac{S_{2}}{N_{2}}\right)^{2}
\end{array}\right),
\]
and by Proposition \ref{prop:large_pop_approx}, we have
\[
\IE_{J,N_{1},N_{2}}\frac{S_{\lambda}S_{\nu}}{N_{\lambda}N_{\nu}}\xrightarrow[N_{\lambda},N_{\nu}\rightarrow\infty]{}m_{\lambda}m_{\nu}
\]
for all $\lambda,\nu\in\IN_{2}$. Combining the last two displays,
we arrive at
\[
\tilde{m}_{N_{1},N_{2}}\xrightarrow[N_{\lambda},N_{\nu}\rightarrow\infty]{}\left(\begin{array}{cc}
m^{2}_{1} & m_{1}m_{2}\\
m_{1}m_{2} & m^{2}_{2}
\end{array}\right).
\]

\subsection{Statement 3}

We first show the result for $I-J>0$. We recall Definition \ref{def:T_stat}
of the statistic
\[
T\left(x\right)\coloneq\frac{1}{n}\sum^{n}_{t=1}\left(\begin{array}{cc}
\left(\sum^{N_{1}}_{i=1}x^{(t)}_{1,i}\right)^{2} & \sum^{N_{1}}_{i=1}x^{(t)}_{1,i}\sum^{N_{2}}_{j=1}x^{(t)}_{2,j}\\
\sum^{N_{1}}_{i=1}x^{(t)}_{1,i}\sum^{N_{2}}_{j=1}x^{(t)}_{2,j} & \left(\sum^{N_{2}}_{j=1}x^{(t)}_{2,j}\right)^{2}
\end{array}\right).
\]

For the random vectors $\left(T_{1,1},T_{2,2},T_{1,2}\right)$, we
have the weak law of large numbers
\[
\left(T_{1,1},T_{2,2},T_{1,2}\right)\xrightarrow[n\rightarrow\infty]{\textup{p}}\left(\IE_{J,N_{1},N_{2}}S^{2}_{1},\IE_{J,N_{1},N_{2}}S^{2}_{2},\IE_{J,N_{1},N_{2}}S_{1}S_{2}\right).
\]

In addition, we also note that $\left(T_{1,1},T_{2,2},T_{1,2}\right)$
is the sum of i.i.d.\! random variables, each summand having expectation
\[
\mu\coloneq\left(\IE_{J,N_{1},N_{2}}S^{2}_{1},\IE_{J,N_{1},N_{2}}S^{2}_{2},\IE_{J,N_{1},N_{2}}S_{1}S_{2}\right)
\]
and covariance matrix
\[
\Sigma\coloneq
\global\long\def\arraystretch{1.3}%
\left(\begin{array}{ccc}
\IV S^{2}_{1} & \textup{Cov}\left(S^{2}_{1},S^{2}_{2}\right) & \textup{Cov}\left(S^{2}_{1},S_{1}S_{2}\right)\\
\textup{Cov}\left(S^{2}_{1},S^{2}_{2}\right) & \IV S^{2}_{2} & \textup{Cov}\left(S^{2}_{2},S_{1}S_{2}\right)\\
\textup{Cov}\left(S^{2}_{1},S_{1}S_{2}\right) & \textup{Cov}\left(S^{2}_{2},S_{1}S_{2}\right) & \IV S_{1}S_{2}
\end{array}\right).
\]
The central limit theorem thus yields
\begin{equation}
\sqrt{n}\left(\left(T_{1,1},T_{2,2},T_{1,2}\right)-\mu\right)\xrightarrow[n\rightarrow\infty]{\textup{d}}\mathcal{N}\left(0,\Sigma\right).\label{eq:CLT_T}
\end{equation}
for any regime. By Definitions \ref{def:T_stat} and \ref{def:low_temp_est},
this proves the statement $\sqrt{n}\left(\underbar{\ensuremath{\hat{m}}}-\ensuremath{\tilde{m}}\right)\xrightarrow[n\rightarrow\infty]{\textup{p}}\mathcal{N}\left(0,D\right)$
for the low temperature regime where $I-J\ngeq0$.

Now let $I-J>0$. Set $\mathbf{P}\coloneq\left\{ \left(s,t,u\right)\in\IR^{3}\,|\,\left(\begin{array}{cc}
s & u\\
u & t
\end{array}\right)>0\right\} $. The functions $f,g,h:\mathbf{P}\rightarrow\IR$, defined for all
$\left(s,t,u\right)\in\mathbf{P}$ by $f\left(s,t,u\right)\coloneq\left(I-\boldsymbol{N}\left(\begin{array}{cc}
s & u\\
u & t
\end{array}\right)^{-1}\boldsymbol{N}\right)_{1,1}$, $g\left(s,t,u\right)\coloneq\left(I-\boldsymbol{N}\left(\begin{array}{cc}
s & u\\
u & t
\end{array}\right)^{-1}\boldsymbol{N}\right)_{2,2},$and $h\left(s,t,u\right)\coloneq\left(I-\boldsymbol{N}\left(\begin{array}{cc}
s & u\\
u & t
\end{array}\right)^{-1}\boldsymbol{N}\right)_{1,2}$ are continuously differentiable on the open set
\[
\Gamma\coloneq\left\{ \left(s,t,u\right)\in\mathbf{P}\,\left|\,I-\boldsymbol{N}\left(\begin{array}{cc}
s & u\\
u & t
\end{array}\right)^{-1}\boldsymbol{N}>0\right.\right\} ,
\]
and we have
\begin{align*}
f\left(T_{1,1}\left(x\right),T_{2,2}\left(x\right),T_{1,2}\left(x\right)\right) & =\left(\hat{J}_{N_{1},N_{2},n}\left(x\right)\right)_{1,1}\\
g\left(T_{1,1}\left(x\right),T_{2,2}\left(x\right),T_{1,2}\left(x\right)\right) & =\left(\hat{J}_{N_{1},N_{2},n}\left(x\right)\right)_{2,2}\\
h\left(T_{1,1}\left(x\right),T_{2,2}\left(x\right),T_{1,2}\left(x\right)\right) & =\left(\hat{J}_{N_{1},N_{2},n}\left(x\right)\right)_{1,2},
\end{align*}
which is equivalent to
\[
\underbar{\ensuremath{\hat{J}}}\left(x\right)=\left(f\left(T_{1,1}\left(x\right),T_{2,2}\left(x\right),T_{1,2}\left(x\right)\right),g\left(T_{1,1}\left(x\right),T_{2,2}\left(x\right),T_{1,2}\left(x\right)\right),h\left(T_{1,1}\left(x\right),T_{2,2}\left(x\right),T_{1,2}\left(x\right)\right)\right).
\]
We note that the functions $f,g,h$ are continuously differentiable
and the set $\Gamma$ is an open subset of $\IR^{3}$. Thus, we can
apply Theorem \ref{thm:delta_method}. Let
\begin{align*}
\mu & \coloneq\underbar{\ensuremath{\tilde{J}}},\quad\Upsilon\coloneq\Sigma\\
\Delta\left(\underbar{\ensuremath{\tilde{J}}}\right) & \coloneq\left(\begin{array}{c}
\nabla f\left(\underbar{\ensuremath{\tilde{J}}}\right)^{T}\\
\nabla g\left(\underbar{\ensuremath{\tilde{J}}}\right)^{T}\\
\nabla h\left(\underbar{\ensuremath{\tilde{J}}}\right)^{T}
\end{array}\right).
\end{align*}
Since (\ref{eq:CLT_T}) holds, we arrive at the conclusion that
\[
\sqrt{n}\left(\underbar{\ensuremath{\hat{J}}}-\underbar{\ensuremath{\tilde{J}}}\right)\xrightarrow[n\rightarrow\infty]{\textup{p}}\mathcal{N}\left(0,C\right)
\]
holds by Theorem \ref{thm:delta_method}. The limiting covariance
matrix $C$ is given by
\begin{align}
C & \coloneq\Delta\left(\mu\right)\Sigma\Delta\left(\mu\right)^{T}.\label{eq:lim_cov}
\end{align}

\bibliographystyle{plain}

\newpage{}

\appendix

\section*{Appendix}
\begin{thm}[Continuous Mapping]
\label{thm:cont_mapping}Let $\left(Y_{n}\right)_{n\in\IN}$ be a
sequence of random variables and $Y$ a random variable, each of them
taking values in some subset $A\subset\IR$, such that $Y_{n}\xrightarrow[n\rightarrow\infty]{\textup{p}}Y$,
and let $g:A\rightarrow\IR$ be a continuous function. Then
\[
g\left(Y_{n}\right)\xrightarrow[n\rightarrow\infty]{\textup{p}}g\left(Y\right).
\]
\end{thm}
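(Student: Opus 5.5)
We will prove Theorem \ref{thm:cont_mapping} through the subsequence characterisation of convergence in probability. This avoids any uniform-continuity argument, which would be awkward here because $A$ is an arbitrary subset of $\IR$ and need not be closed, so $g$ need not be uniformly continuous on bounded pieces of $A$. The key fact, proved first, is the following: a sequence $(W_n)$ of real random variables converges in probability to $W$ if and only if every subsequence $(W_{n_k})$ has a further subsequence $(W_{n_{k_l}})$ converging to $W$ almost surely.

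\emph{Necessity.} Given $W_{n_k}\xrightarrow{\textup{p}}W$, choose indices $k_1<k_2<\cdots$ with $\IP(|W_{n_{k_l}}-W|>2^{-l})<2^{-l}$. By Borel--Cantelli, almost surely only finitely many of these events occur, so $W_{n_{k_l}}\to W$ almost surely.

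\emph{Sufficiency.} Suppose $W_n\not\xrightarrow{\textup{p}}W$. Then there are $\varepsilon,\delta>0$ and a subsequence with $\IP(|W_{n_k}-W|>\varepsilon)\geq\delta$ for all $k$. Take a further subsequence converging almost surely. Almost sure convergence implies convergence in probability, by dominated convergence applied to the indicators $\II_{\{|W_{n_{k_l}}-W|>\varepsilon\}}$. This contradicts the lower bound $\delta$.

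With this fact, the theorem follows quickly. Take an arbitrary subsequence $(g(Y_{n_k}))_k$. Since $Y_n\xrightarrow{\textup{p}}Y$, the necessity direction gives a further subsequence with $Y_{n_{k_l}}\to Y$ almost surely. On the full-measure event where this convergence holds, all values $Y_{n_{k_l}}(\omega)$ and $Y(\omega)$ lie in $A$. Continuity of $g$ on $A$, in its sequential form relative to $A$, then gives $g(Y_{n_{k_l}}(\omega))\to g(Y(\omega))$. Hence $g(Y_{n_{k_l}})\to g(Y)$ almost surely. Since the original subsequence was arbitrary, the sufficiency direction yields $g(Y_n)\xrightarrow{\textup{p}}g(Y)$.

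Some care is needed at two points. First, $g(Y_n)$ and $g(Y)$ must be random variables; this holds because $g$ is continuous, hence Borel measurable on $A$ with its relative topology. Second, continuity must be used only at points of $A$, which is guaranteed since $Y$ takes values in $A$. Neither point is a real obstacle.

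The same argument works verbatim for random vectors in $\IR^d$ and for matrix-valued random variables, which is the form actually used in Statement 1 of Theorem \ref{thm:limit}. There the map is $A\mapsto I-\boldsymbol{N}A^{-1}\boldsymbol{N}$ on the open set of invertible matrices. The only change is to replace $|\cdot|$ by a norm, for example the max-entry norm.
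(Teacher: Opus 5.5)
Your proof is correct, but it takes a different route from the paper. The paper's proof consists only of a reference to Theorem 2.3 in van der Vaart's \emph{Asymptotic Statistics}, where the convergence-in-probability part is proved by a direct estimate rather than by subsequences. Fix $\varepsilon>0$ and $\delta>0$, and let $B_\delta$ be the set of $x\in A$ for which some $y\in A$ satisfies $|x-y|<\delta$ but $|g(x)-g(y)|>\varepsilon$. If $Y\notin B_\delta$ and $|g(Y_n)-g(Y)|>\varepsilon$, then $|Y_n-Y|\ge\delta$, so $\IP(|g(Y_n)-g(Y)|>\varepsilon)\le\IP(Y\in B_\delta)+\IP(|Y_n-Y|\ge\delta)$. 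The second term vanishes as $n\to\infty$. The first tends to $0$ as $\delta\downarrow 0$, because pointwise continuity of $g$ gives $B_\delta\downarrow\emptyset$.

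That argument also never uses uniform continuity, so the difficulty you anticipate with non-closed $A$ does not arise there either. It needs no auxiliary lemma, and it directly gives the stronger version in which $g$ need only be continuous on a set $C$ with $\IP(Y\in C)=1$, for random vectors in $\IR^k$. Your route replaces that one-line bound with the almost-sure subsequence characterisation plus Borel--Cantelli, after which only sequential continuity at the points $Y(\omega)$ is used. It is fully self-contained and reaches the same generality, since continuity at $Y(\omega)$ on a full-measure event is all you need. It also transfers verbatim to matrix-valued variables, which is what Statement 1 actually uses even though the paper states the theorem only for $A\subset\IR$. In short, the citation buys brevity and a broader theorem (van der Vaart also covers convergence in distribution and almost surely); your argument buys a complete proof on the page.

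One caveat about your closing remark. In Statement 1, $T(X)$ does not take values in the invertible matrices. For $n=1$ it is the rank-one matrix $ss^{T}$ with $s=(S_1,S_2)^{T}$, and for every $n$ all observations coincide with positive probability. So the hypothesis that every $Y_n$ lies in the domain of $g$ fails there. What is actually needed is the variant in which $g$ is continuous on an open set containing the limit. Your argument covers this with one extra line: on the almost-sure event, $Y_{n_{k_l}}(\omega)$ eventually enters that open set.
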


\begin{proof}
See Theorem 2.3 in \cite{VanderVaart1998}.
\end{proof}

\begin{prop}
\label{prop:conv_stat}Let $n,N_{1},N_{2}\in\IN$ and let $R:\Omega^{n}_{N_{1}+N_{2}}\rightarrow\IR$
be a statistic of the form
\[
R\left(x^{(1)},\ldots,x^{(n)}\right)\coloneq\frac{1}{n}\sum^{n}_{t=1}f\left(x^{(t)}\right),\quad\left(x^{(1)},\ldots,x^{(n)}\right)\in\Omega^{n}_{N_{1}+N_{2}},
\]
for some function $f:\Omega_{N_{1}+N_{2}}\rightarrow\IR$. Let $X$
be a random vector on $\Omega_{N_{1}+N_{2}}$ with Curie-Weiss distribution
according to Definition \ref{def:CWM}. Let $\mu\coloneq\IE_{\beta,N}f\left(X\right)$
and $\sigma^{2}\coloneq\IV_{\beta,N}f\left(X\right)$.

Then the following three statements hold:
\begin{enumerate}
\item A law of large numbers holds for the sequence $R\left(x^{(1)},\ldots,x^{(n)}\right)$:
\[
R\left(x^{(1)},\ldots,x^{(n)}\right)\xrightarrow[n\rightarrow\infty]{\textup{p}}\mu.
\]
\item A central limit theorem holds for the sequence $\sqrt{n}\left(R\left(x^{(1)},\ldots,x^{(n)}\right)-\mu\right)$:
\[
\sqrt{n}\left(R\left(x^{(1)},\ldots,x^{(n)}\right)-\mu\right)\xrightarrow[n\rightarrow\infty]{\textup{d}}\mathcal{N}\left(0,\sigma^{2}\right).
\]
.
\end{enumerate}
\end{prop}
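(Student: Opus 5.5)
The statement to prove is Proposition \ref{prop:conv_stat}. Both parts follow from classical limit theorems for i.i.d. sequences, because the statistic $R$ is a sample mean of i.i.d. bounded random variables.

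\textbf{Setting up the i.i.d. structure.} Let $X^{(1)},\ldots,X^{(n)}$ be the i.i.d. random vectors on $\Omega_{N_{1}+N_{2}}$ whose realisations make up the sample, each distributed according to Definition \ref{def:CWM}. Put $Y_{t}\coloneq f\left(X^{(t)}\right)$ for $t\in\IN$. The map $f$ is measurable on the finite set $\Omega_{N_{1}+N_{2}}$, so the $Y_{t}$ are again i.i.d. Since $\Omega_{N_{1}+N_{2}}$ has only $2^{N_{1}+N_{2}}$ elements, $f$ takes finitely many values and is therefore bounded, say $\left|Y_{t}\right|\leq K\coloneq\max_{\omega}\left|f\left(\omega\right)\right|$. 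Hence all moments of $Y_{t}$ exist, and in particular $\mu=\IE Y_{1}$ and $\sigma^{2}=\IV Y_{1}\leq K^{2}<\infty$ are finite. By definition, $R\left(X^{(1)},\ldots,X^{(n)}\right)=\frac{1}{n}\sum_{t=1}^{n}Y_{t}$.

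\textbf{Statement 1.} I would invoke the weak law of large numbers directly. A self-contained route is Chebyshev's inequality: for every $\varepsilon>0$,
\[
\IP\left(\left|\frac{1}{n}\sum_{t=1}^{n}Y_{t}-\mu\right|>\varepsilon\right)\leq\frac{\sigma^{2}}{n\varepsilon^{2}}\xrightarrow[n\rightarrow\infty]{}0,
\]
where the bound uses that independence gives $\IV\left(\frac{1}{n}\sum_{t}Y_{t}\right)=\sigma^{2}/n$. This yields convergence in probability to $\mu$.

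\textbf{Statement 2.} I would apply the classical Lindeberg--L\'evy central limit theorem to the i.i.d. sequence $(Y_{t})$, which has finite variance $\sigma^{2}$. It gives $\sqrt{n}\left(\frac{1}{n}\sum_{t}Y_{t}-\mu\right)\xrightarrow{\textup{d}}\mathcal{N}\left(0,\sigma^{2}\right)$. The only care needed is the degenerate case $\sigma^{2}=0$. There $Y_{t}=\mu$ almost surely, so the left side is identically $0$, which is the point mass $\mathcal{N}(0,0)$, and the statement still holds.

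The multivariate versions used in the proof of Theorem \ref{thm:limit} follow in the same way. One applies the multivariate CLT to the bounded vector $\left(S_{1}^{2},S_{2}^{2},S_{1}S_{2}\right)$, or applies the Cram\'er--Wold device together with the scalar statement to $f=a_{1}S_{1}^{2}+a_{2}S_{2}^{2}+a_{3}S_{1}S_{2}$. There is no real obstacle here. The only step needing attention is to check that boundedness guarantees the moment hypotheses, which is immediate from the finiteness of the state space, and to note that the subscripts $\beta,N$ in $\IE_{\beta,N}$ should be read as $\IE_{J,N_{1},N_{2}}$.
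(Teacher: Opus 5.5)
Your proof is correct and follows the paper's own approach. The paper gives no separate proof of this proposition, but in proving Theorem \ref{thm:limit} it uses the same reasoning: $R$ is a sample mean of bounded i.i.d.\ variables, so the weak law of large numbers and the classical central limit theorem apply. Your handling of the degenerate case $\sigma^{2}=0$ and your Cram\'er--Wold remark, which justifies the multivariate CLT the paper uses for $\left(T_{1,1},T_{2,2},T_{1,2}\right)$, are useful additions the paper omits.
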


\begin{thm}[Delta Method]
\label{thm:delta_method}Let $\left(Y_{n}\right)_{n\in\IN}$ be a
sequence of random vectors taking values in some open set $D\subset\IR^{d}$
for fixed $d\in\IN$ and assume there are $\mu\in\IR^{d},\Upsilon\in\IR^{d\times d},\Upsilon>0$
such that
\begin{equation}
\sqrt{n}\left(Y_{n}-\mu\right)\xrightarrow[n\rightarrow\infty]{\textup{d}}\mathcal{N}\left(0,\Sigma\right)\label{eq:conv_rvs}
\end{equation}
holds. Let $f:D\rightarrow\IR^{d}$ be a continuously differentiable
function with domain $D\subset\IR$. Let $f_{i}$ be the $i$-th coordinate
function, $i\in\IN_{d}$, and $\Delta$ be the Jacobian matrix of
$f$:
\[
\Delta\left(x\right)\coloneq\left(\frac{\partial f_{i}}{\partial x_{j}}\left(x\right)\right)_{i,j\in\IN_{d}}\in\IR^{d\times d},\quad x\in D.
\]

Assume that $\det\left(\Delta\left(\mu\right)\right)\neq0$. Then
\begin{equation}
\sqrt{n}\left(f\left(Y_{n}\right)-f\left(\mu\right)\right)\xrightarrow[n\rightarrow\infty]{\textup{d}}\mathcal{N}\left(0,\Delta\left(\mu\right)\Upsilon\Delta\left(\mu\right)^{T}\right)\label{eq:conv_fn}
\end{equation}
holds.
\end{thm}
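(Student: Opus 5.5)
The plan is the standard first-order Taylor argument combined with Slutsky's lemma. The covariance in (\ref{eq:conv_rvs}) is read as $\Upsilon$; the symbol $\Sigma$ there is taken to be a typographical slip for $\Upsilon$.

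First, I would show that $Y_{n}\xrightarrow[n\rightarrow\infty]{\textup{p}}\mu$. Since $\sqrt{n}\left(Y_{n}-\mu\right)$ converges in distribution, it is tight, i.e.\ $O_{\IP}(1)$. Multiplying by $n^{-1/2}\to0$ gives $Y_{n}-\mu=o_{\IP}(1)$. As $D$ is open and $\mu\in D$, I fix a closed ball $B\subset D$ around $\mu$; then $\IP\left(Y_{n}\in B\right)\to1$, so events of vanishing probability can be neglected throughout.

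Second, on the convex set $B$ I would use the integral form of the mean value theorem for the $C^{1}$ map $f$:
\[
f\left(Y_{n}\right)-f\left(\mu\right)=\left(\int_{0}^{1}\Delta\left(\mu+s\left(Y_{n}-\mu\right)\right)ds\right)\left(Y_{n}-\mu\right)\eqqcolon\bar{\Delta}_{n}\left(Y_{n}-\mu\right).
\]
Continuity of $\Delta$ is uniform on the compact set $B$, and $Y_{n}\to\mu$ in probability. The continuous mapping argument (Theorem \ref{thm:cont_mapping}, applied coordinatewise) then gives $\bar{\Delta}_{n}\xrightarrow[n\rightarrow\infty]{\textup{p}}\Delta\left(\mu\right)$. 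Hence
\[
\sqrt{n}\left(f\left(Y_{n}\right)-f\left(\mu\right)\right)=\Delta\left(\mu\right)\sqrt{n}\left(Y_{n}-\mu\right)+\left(\bar{\Delta}_{n}-\Delta\left(\mu\right)\right)\sqrt{n}\left(Y_{n}-\mu\right).
\]
In the remainder, the first factor is $o_{\IP}(1)$ and the second is $O_{\IP}(1)$, so the product is $o_{\IP}(1)$.

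Third, the leading term is a fixed linear map applied to a sequence converging in distribution to $\mathcal{N}\left(0,\Upsilon\right)$. By the continuous mapping theorem for convergence in distribution, or directly via characteristic functions, it converges to $\mathcal{N}\left(0,\Delta\left(\mu\right)\Upsilon\Delta\left(\mu\right)^{T}\right)$. Slutsky's lemma then absorbs the $o_{\IP}(1)$ remainder and yields (\ref{eq:conv_fn}).

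The assumption $\det\Delta\left(\mu\right)\neq0$ is not needed for convergence itself. Together with $\Upsilon>0$, it only guarantees that the limiting covariance is non-singular, which is the form used in Statement 3 of Theorem \ref{thm:limit}.

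The main obstacle is the remainder step: we must control $\bar{\Delta}_{n}$ even though $Y_{n}$ may leave $D$, or the segment $[\mu,Y_{n}]$ may leave $D$, with small probability. I would handle this by restricting to the event $\left\{ Y_{n}\in B\right\}$, whose complement has probability tending to zero, and defining the expansion arbitrarily off this event. Convergence in probability and in distribution are insensitive to such modifications.
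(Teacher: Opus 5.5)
Your proof is correct. It rests on the same linearisation idea as the paper, but you carry it out differently at the two points where the paper's argument is defective.

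\textbf{The remainder term.} The paper expands each coordinate as $f_i(x)=f_i(\mu)+\nabla f_i(\mu)^T(x-\mu)+\sum_{k,\ell}h_{k,\ell}(x)(x_k-\mu_k)(x_\ell-\mu_\ell)$ with $h_{k,\ell}(x)\to0$. This claims a remainder of order $o(|x-\mu|^2)$, which a $C^1$ (or even $C^2$) map need not have: for $f(x)=x^2$ one is forced to take $h\equiv1$. The paper then declares the remainder negligible without establishing $Y_n\to\mu$ in probability. Your approach is the first-order argument that continuous differentiability actually supports:
\begin{itemize}
\item the integral mean value theorem on a closed ball $B\subset D$;
\item $Y_n-\mu=o_{\IP}(1)$, obtained from tightness;
\item the product bound $o_{\IP}(1)\cdot O_{\IP}(1)$.
\end{itemize}
The correct Peano form $\sum_k h_k(x)(x_k-\mu_k)$ with $h_k\to0$ would serve equally well. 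Since it needs no convexity, it would spare you the ball $B$ and the modification off $\{Y_n\in B\}$, and the remainder step would be the same $o_{\IP}(1)\cdot O_{\IP}(1)$ argument.

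\textbf{The joint limit.} The paper proves marginal limits coordinate by coordinate and then assembles the joint limit via linear combinations. However, marginal convergence does not imply convergence of $\sum_i a_iU_{n,i}$. Moreover, the variance it writes, $\sum_i a_i^2\nabla f_i(\mu)^T\Upsilon\nabla f_i(\mu)$, drops the cross terms, which would give a diagonal covariance inconsistent with (\ref{eq:conv_fn}). The correct value $a^T\Delta(\mu)\Upsilon\Delta(\mu)^Ta$ comes from writing $\sum_i a_iU_{n,i}=(\Delta(\mu)^Ta)^T\sqrt{n}(Y_n-\mu)$ and using the joint convergence (\ref{eq:conv_rvs}). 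Your direct application of the fixed linear map $\Delta(\mu)$ to $\sqrt{n}(Y_n-\mu)$ does this in one step and makes the Cram\'er--Wold detour unnecessary.

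Your remarks that $\Sigma$ should read $\Upsilon$, and that $\det\Delta(\mu)\neq0$ only ensures a non-degenerate limit, are also correct.

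\textbf{A minor citation point.} Theorem \ref{thm:cont_mapping} as stated in the paper covers only scalar $Y_n$. Applying it ``coordinatewise'' in the output does not remove the $d$-dimensional input. Either cite the general (metric-space) continuous mapping theorem, or conclude $\bar{\Delta}_n\xrightarrow{\textup{p}}\Delta(\mu)$ directly from the uniform continuity of $\Delta$ on $B$, which you already invoke.
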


\begin{proof}
Fix a coordinate $i\in\IN_{d}$. The Taylor expansion of function
$f_{i}:D\rightarrow\IR$ is given by
\[
f_{i}\left(x\right)=f_{i}\left(\mu\right)+\sum^{d}_{j=1}\frac{\partial f_{i}}{\partial x_{j}}\left(\mu\right)\left(x_{j}-\mu_{j}\right)+\sum^{d}_{k=1}\sum^{d}_{\ell=1}h_{k,\ell}\left(x\right)\left(x_{k}-\mu_{k}\right)\left(x_{\ell}-\mu_{\ell}\right),
\]
where, for each $k,\ell\in\IN_{d}$, $h_{k,|\ell}:D\rightarrow\IR$
is a function which satisfies $\lim_{x\rightarrow\mu}h_{k,\ell}\left(x\right)=0$.
Therefore,
\[
\sqrt{n}\left(f_{i}\left(Y_{n}\right)-f_{i}\left(\mu\right)\right)=\sum^{d}_{j=1}\frac{\partial f_{i}}{\partial x_{j}}\left(\mu\right)\sqrt{n}\left(Y_{n,j}-\mu_{j}\right)+\sum^{d}_{k=1}\sum^{d}_{\ell=1}h_{k,\ell}\left(x\right)\sqrt{n}\left(Y_{n,k}-\mu_{k}\right)\left(Y_{n,\ell}-\mu_{\ell}\right).
\]
The second summand on the right hand side above converges to 0 in
probability due to $\lim_{x\rightarrow\mu}h_{k,\ell}\left(x\right)=0$.
By (\ref{eq:conv_rvs}), $\sqrt{n}\left(Y_{n}-\mu\right)\xrightarrow[n\rightarrow\infty]{\textup{d}}\mathcal{N}\left(0,\Upsilon\right)$
holds. It follows that the first summand has the limiting distribution
\begin{equation}
\sum^{d}_{j=1}\frac{\partial f_{i}}{\partial x_{j}}\left(\mu\right)\sqrt{n}\left(Y_{n,j}-\mu_{j}\right)\xrightarrow[n\rightarrow\infty]{\textup{d}}\mathcal{N}\left(0,\nabla f_{i}\left(\mu\right)^{T}\Upsilon\nabla f_{i}\left(\mu\right)\right),\label{eq:conv_marginals}
\end{equation}
where
\[
\nabla f_{i}\left(\mu\right)\coloneq\left(\frac{\partial f_{i}}{\partial x_{j}}\left(\mu\right)\right)_{j\in\IN_{d}}\in\IR^{d}
\]
is the gradient of $f_{i}$ at $\mu$. We note that
\begin{equation}
\Delta\left(\mu\right)=\left(\begin{array}{c}
\nabla f_{1}\left(\mu\right)^{T}\\
\vdots\\
\nabla f_{d}\left(\mu\right)^{T}
\end{array}\right).\label{eq:Delta_grad}
\end{equation}
Set
\[
U_{n,i}\coloneq\sum^{d}_{j=1}\frac{\partial f_{i}}{\partial x_{j}}\left(\mu\right)\sqrt{n}\left(Y_{n,j}-\mu_{j}\right),\quad n\in\IN,i\in\IN_{d}.
\]

Let $a_{1},\ldots,a_{d}\in\IR$ and consider the sequences of random
variables $\left(U_{n,i}\right)_{n\in\IN}$, $i\in\IN_{d}$. By (\ref{eq:conv_marginals}),
we have
\[
\sum^{d}_{i=1}a_{i}U_{n,i}\xrightarrow[n\rightarrow\infty]{\textup{d}}\mathcal{N}\left(0,\sum^{d}_{i=1}a^{2}_{i}\nabla f_{i}\left(\mu\right)^{T}\Upsilon\nabla f_{i}\left(\mu\right)\right).
\]
Since by the last display any linear combination of the random variables
$U_{n,i}$ converges to a univariate normal distribution, the random
vectors $\left(U_{n,1},\ldots,U_{n,d}\right)$ converge to a multivariate
normal distribution. Taking into account (\ref{eq:Delta_grad}), we
conclude that (\ref{eq:conv_fn}) holds.
\end{proof}

\end{document}